\documentclass[aps, prx, notitlepage, citeautoscript, superscriptaddress, reprint]{revtex4-2}

\usepackage{xr}
\usepackage{amsmath}
\usepackage{amsfonts,amssymb,amsthm,amsxtra,physics,dsfont}
\usepackage[]{graphicx}
\pagestyle{headings}
\usepackage{grffile}
\usepackage{mathrsfs}
\usepackage{framed}
\usepackage{bbm}
\usepackage{bm}
\usepackage{braket}
\usepackage{xcolor}
\usepackage{mathtools}
\usepackage{tikz}
\usetikzlibrary{quantikz}
\usepackage[bookmarks=false,colorlinks=true,urlcolor=blue,citecolor=blue,linkcolor=blue]{hyperref}
\usepackage{xcolor, soul}
\sethlcolor{yellow}
\usepackage{private}

\usepackage{makecell}
\usepackage{lipsum}
\usepackage{pgfplots}
\usepackage{comment}

\usepackage[capitalize]{cleveref}
\usepackage{algorithm}
\usepackage{algpseudocode}
\newtheorem{theorem}{Theorem}
\newtheorem{lemma}{Lemma}

\newcommand{\tO}{\textbf{O}}
\newcommand{\tx}{\textbf{x}}
\newcommand{\tX}{\textbf{X}}
\newcommand{\tu}{\textbf{u}}
\newcommand{\tK}{\textbf{K}}
\newcommand{\tk}{\textbf{k}}
\newcommand{\tf}{{\textbf{f}}}
\newcommand{\bPhi}{{\bf{\Phi}}}
\newcommand{\bLambda}{{\bf{\Lambda}}}

\preprint{APS/123-QED}

\pgfplotsset{compat=1.17}

\begin{document}
\title{A hybrid method for quantum dynamics simulation} 


\author{Niladri Gomes}
\email{niladri@lbl.gov}

\author{Jia Yin}
\email{jiayin@lbl.gov}

\author{Siyuan Niu}

\author{Chao Yang}

\author{Wibe Albert de Jong}
\affiliation{Applied Mathematics and Computational Research Division, Lawrence Berkeley National Laboratory, Berkeley, California 94720, USA}

\begin{abstract}

 We propose a hybrid approach to simulate quantum many body dynamics by combining Trotter based quantum algorithm  with  classical dynamic mode decomposition. The interest often lies in estimating observables rather than explicitly obtaining the wave function's form.  Our method predicts observables of a quantum state in the long time by using data from a set of short time measurements from a quantum computer. The upper bound for the global error of our method scales as $O(t^{3/2})$ with a fixed set of the measurement.  We apply our method to quench dynamics in Hubbard model and nearest neighbor spin systems and show that the observable properties can be predicted up to a reasonable error by controlling the number of data points obtained from the quantum measurements. 
\end{abstract}

\date{\today}

\maketitle

\section{Introduction}

Simulating quantum dynamics is becoming increasingly viable with the advancements in universal quantum computers \cite{Smith_2019, AVQDS, lee2022variational}. One notable advantage of utilizing quantum computers for this purpose is the flexibility they provide. Quantum computers offer the opportunity to explore previously inaccessible physics, such as simulating lattice gauge theories with dynamical gauge fields. These theories play a crucial role in the study of strongly-correlated quantum systems but have proven challenging to experimentally realize due to the complexities of multi-body couplings.  

Despite the advancements in quantum computing, the current state of the art devices still suffer from significant noise and imperfections ~\cite{nisq}. Consequently, a crucial area of research focuses on discovering methodologies to conduct meaningful scientific computations on these systems. As a result, variational algorithms have emerged as a valuable approach to investigate energy eigenstates and dynamics within spin and fermionic systems \cite{mclachlan64variational, vqe, vqe_theory, vqe_pea_h2, hardware_efficient_vqe, VQE_qcc, FengVQE}. However, beyond the assessment of energy eigenstates, there is a growing interest within the scientific community to explore the dynamic properties of electronic matter at low temperatures, presenting an immediate area of interest.

There is a significant ongoing debate about whether the currently available processors can be sufficiently improved in terms of reliability to execute shorter-depth quantum circuits on a scale that would offer practical advantages for problem-solving \cite{lee2208there}. Presently, the general consensus suggests that the implementation of even basic quantum circuits capable of surpassing classical capabilities will likely require the arrival of more advanced, fault-tolerant processors. However, with the existing NISQ (Noisy Intermediate-Scale Quantum) resources, it is feasible to leverage a combination of existing classical and numerical tools to achieve valuable outcomes. To that end, predictive techniques for a large number of obeservables to characterize quantum states are becoming popular with the advent of classical shadow methods \cite{huang2020predicting} and its derandomized version \cite{huang2021efficient}. In our current work, we will adopt a different predictive protocol to estimate observables during a long time Hamiltonian simulation.

The primary focus of simulating dynamics in quantum mechanical systems revolves around conducting Hamiltonian simulations of quantum states. In simpler terms, to retrieve the dynamical properties of a system described by a wavefunction $\ket{\psi}$ at an initial time, and subject to a Hamiltonian $H$, one can calculate $e^{-iHt}\ket{\psi}$ to obtain the state at a time $t$.
Different methods have been proposed in the recent years to do Hamiltonian simulation using a quantum computer. Lloyd’s method for simulating  assumes a tensor product structure of smaller subsystems \cite{lloyd1996universal}. Aharonov and Ta-Shma (ATS) \cite{aharonov2003adiabatic} consider the alternative case where there is no evident tensor product structure to the Hamiltonian, but it is sparse and there is an efficient method of calculating the nonzero entries in a given column of the Hamiltonian. In recent years, variational approaches have shown promising results at generating low depth circuits \cite{variational_mclachlan,AVQDS,Steckmann2021,lee2022variational}. However, the variational methods lack a definitive proof of scaling and error bounds in the assymptotic limit. Considering efficient scaling, Hamiltonian simulation using first-order integrators of Suzuki~\cite{berry2007efficient} are one of the promising choices as scalable quantum algorithms. Despite their favorable asymptotic scaling, the Trotter-based methods suffer from increasing circuit depth which makes it hard if not impossible to do any meaningful calculation for long time simulations.

The Trotter-based methods, which are commonly used for Hamiltonian simulation, face a significant drawback when applied to NISQ devices. The circuit depth grows exponentially with each time step, making calculations practically infeasible beyond a few steps. Although recent advancements have led to impressive scalability in terms of quantum volume, enabling the consideration of problems with a large number of qubits, meaningful results can only be obtained for a limited number of time steps. 
One possible approach is to utilize established classical extrapolation methods such as dynamic mode decomposition (DMD) to predict long time state properties collected from the hardware data. However, due to the complex nature of quantum wave functions, applying such predictions can be computationally intensive.
In scientific problems, the interest often lies in estimating observables rather than explicitly obtaining the wave function's form.  In this paper, instead of employing DMD for state vector extrapolation, we propose a method that predicts observables of a quantum state by gathering data from a quantum computer.

\begin{figure*}[t]
    \includegraphics[width=\linewidth]{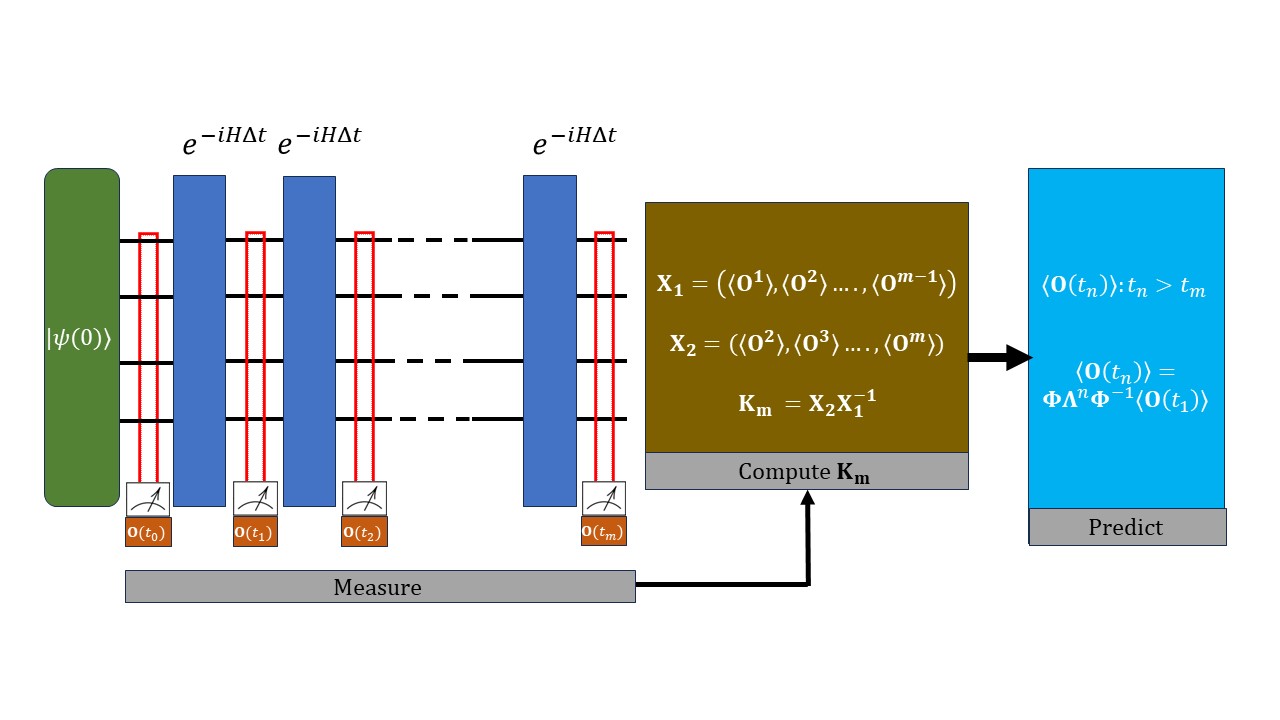} 
    \caption{\textbf{ Flow chart showing steps for efficient hybrid Hamiltonian simulation using a quantum processor and DMD:} First measure the physical observables of interest  in a quantum processor during a small time window, and then construct a reduced order linear model from these data to approximate the actual dynamics by using DMD. We use the constructed model to extrapolate long time dynamics of the physical observables. Evaluate the relevant modes using DMD. Estimate observables for long time simulation.}
.  
    \label{fig:flowchart}
\end{figure*}

\section{Method}

 We start from the equation of motion of a spatially varying observable $\tO(\tx,t)\equiv \qty[ O(x_{1},t), O(x_{2},t), ...,O(x_{N},t) ]^{T}$ subjected to a time-independent system Hamiltonian $H$. 
 \be 
 \dv{\ev{\tO}}{t}= i \mqty[ \ev{\qty[H,O_1]}\\ \ev{\qty[H,O_2]}\\. \\. \\ \ev{\qty[H,O_N]}] 
 \label{eq:hberg}
 \ee 
In the above equation, we define $O_{j}(t)=O(x_j,t) \in \mathbb{C}^{N_H \times N_H}$, and $\ev{O} = \expval{O}{\psi}$ holds for any $N_H\times N_H$ matrix $O$. 
On the other hand, from the Koopman operator theory~\cite{DMDtoKoop,Koopman1,Koopman2}, for any $\Delta t>0$, there exists a linear, infinite-dimensional operator $\mathcal{K}_{\Delta t}$, such that
\be 
\ev{\tO (t+\Delta t)} = \mathcal{K}_{\Delta t} \ev{\tO (t)}, 
\label{eq:k_eq}
\ee 
holds.
Throughout our discussions, we have adopted the convention of setting the Planck's constant $\hbar = 1$. By taking the time steps $t_n = (n-1)\Delta t,\; n = 1,..,M+1$, we get a temporally  discretized equation by finite difference,
\be 
\ev{\tO(t_{n+1})} = \ev{\tO(t_n)} + i \Delta t \mqty[ \ev{\qty[H,O_1(t_n)]}\\ \ev{\qty[H,O_2(t_n)]}\\. \\. \\ \ev{\qty[H,O_N(t_n)]}].
\label{eq:dt_hberg}
\ee 

The aim of DMD is to approximate the infinite-dimensional operator $\mathcal{K}_{\Delta t}$ with a finite-dimensional linear operator which can be written in a matrix form $\tK$. Within the DMD framework, Eq.~\eqref{eq:dt_hberg} is advanced by $m$-time steps and the resulting temporal `snapshots' are recorded in two matrices,
\bea 
\tX_1 &=& \mqty[| & | & \; & \; & | \\ \ev{{\tO}^1}& \ev{{\tO}^2}& . & . &\ev{{\tO}^{m-1}} \\ 
| & | & \; & \; & | ]\\
\tX_2 &=& \mqty[| & | & \; & \; & | \\ \ev{{\tO}^{2}}& \ev{{\tO}^{3}}& . & . &\ev{{\tO}^{m}} \\ 
| & | & \; & \; & | ]
\eea
with ${\tO}^j=\tO(t_j)$.

Elements of  $\tX_1$ and $\tX_2$ are obtained by measuring the quantity $\ev{O(\tx,t)}$ up to $m$-time steps. The time evolved state $\ket{\psi(t)} = e^{-iHt}\ket{\psi(0)}$ is prepared using the Trotter decomposition of the evolution operator $e^{-iHt}$. We will discuss more details about the Trotter method in a later subsection. 

From these data matrices, we can find an $N\times N$ matrix $\tK_m$ which approximates $\mathcal{K}_{\Delta t}$ in the sense that $\tK_m = \bPhi_m \bLambda_m  \bPhi_m^\dagger$ with its eigenvalues ($\bLambda_m$) and eigenvectors ($\bPhi_m$) close to $\mathcal{K}_{\Delta t}$'s.  The finite-dimensional approximation to the Koopman operator $\mathcal{K}_{\Delta t}$ is obtained by solving the following linear least squares problem
\begin{equation}
 \min_{\tK_m} \| \tK_m \mathbf{X}_1 - \mathbf{X}_2 \|_F^2,
\label{eq:lsq1}
\end{equation}
with the solution
\begin{equation}\label{eq:A1}
    \tK_m = \mathbf{X}_2\mathbf{X}_1^+,
\end{equation}
where $\mathbf{X}_1^+$ is the Moore-Penrose pseudoinverse of $\mathbf{X}_1$ that can be computed from the singular value decomposition (SVD)~\cite{SVD} of $\mathbf{X}_1$. In practice, if $\mathbf{X}_1$ has low-rank structure, we will take a truncated SVD instead of full SVD in order to enhance efficiency. Then after obtaining an $r\times r$ matrix $\widetilde{\tK}_m$ with $r$ the truncated rank, we need to project it back to the original $N$-dimensional observable space.

Once we get $\mathbf{X}_1^\dagger$, and hence $\tK_m$ is obtained,  from the eigendecomposition $\tK_m\bPhi_m = \bPhi_m\bLambda_m$, we can construct the solution of~\eqref{eq:dt_hberg} at a later time step $n>m$ as,
\be
\ev{\tO(t_{n})} = \bPhi_m \bLambda_m^{n-1} \bPhi_m^{\dagger}\ev{\tO(t_{1})}.
\ee
We can also get the approximation at any time $t>0$ by taking $\mathbf{\Omega}_m = {\rm{ln}}\bLambda_m/\Delta t$, and then we get
\be
\ev{\tO(t)} = \bPhi_m\exp(\mathbf{\Omega}_mt)\bPhi_m^{\dagger}\ev{\tO(t_{1})}.
\ee
Since ${\mathbf{\Omega}}_m$ is a diagonal matrix, the real part of each diagonal element describes the changes in amplitude, while the imaginary part describes the frequencies of the oscillation in the approximated dynamics.

 The above described procedure is shown in a flowchart in Fig.~\ref{fig:flowchart} and the algorithm is summarized in Algorithm~\ref{alg:DMD}.
In short, our hybrid method includes time evolution of a quantum state in a  quantum computer and simultaneous measurement of an observable of the states. Using the measured data we utilize the DMD framework to predict long time dynamics of the observable. Both Trotter based methods and DMD are popular methods in their respective communities of quantum algorithms and differential equation solvers. 
To facilitate later discussions, we summarize the main points of Trotter based time evolution and provide a predictive error bound of DMD for our application to quantum simulation. 
\begin{algorithm}[H]
\caption{The DMD procedure}\label{alg:DMD}
\begin{algorithmic}[1]
\State $\mathbf{X}_1 \gets \left[\ev{{\tO}^1}, \, ..., \, \ev{{\tO}^{m-1}}\right]$, $\mathbf{X}_2 \gets \left[\ev{{\tO}^{2}}, \, ..., \, \ev{{\tO}^{m}}\right]$
\State Truncated SVD of $\mathbf{X}_1$ with rank $r$: $\mathbf{X}_1 = \widetilde{\mathbf{U}}\widetilde{\mathbf{\Sigma}}\widetilde{\mathbf{V}}^T$
\State $\widetilde{\tK}_m \gets \widetilde{\mathbf{U}} \mathbf{X}_2 \widetilde{\mathbf{V}} \widetilde{\mathbf{\Sigma}}^{-1}$ \Comment{The projected Koopman operator}
\State Eigen-decomposition of $\widetilde{\tK}_m$: $\widetilde{\tK}_m\mathbf{W}_m = \mathbf{W}_m{\mathbf{\Lambda}_m}$
\State $\mathbf{\Phi}_m \gets \mathbf{X}_2\widetilde{\mathbf{V}}\widetilde{\mathbf{\Sigma}}^{-1}\mathbf{W}_m$ \Comment{DMD modes}
\State $\mathbf{\Omega}_m \gets {\ln{\mathbf{\Lambda}_m}}/{\Delta t}$, $\mathbf{b} = \mathbf{\Phi}_m^\dagger \ev{{\tO}^1}$ 
\State $\ev{\tO(t)}\gets \mathbf{\Phi}_m\exp(\mathbf{\Omega}_mt)\mathbf{b}$ \Comment{Reconstruction and predicion}
\end{algorithmic}
\end{algorithm}

\textit{Trotter Suzuki methods--} 
Given a Hamiltonian of the form $H = \sum_{j=1}^{J} H_j$, we want to simulate $e^{-iHT}$ by a sequence of exponentials like $e^{-iH_{j}\Delta t}$, where the total simulation time $T$ is divided into $M$ small steps of width $\Delta t$ such that $M = T/\Delta t$.  According to \cite{berry2007efficient}, this can be achieved within error $\epsilon$ by using $M_{exp} \le 4J^{2}\tau e^{2\sqrt{\ln{5}} \ln{\frac{J\tau}{\eps}}}$ ($\tau = \norm{H}T$) steps with exponentials like $e^{-iH_{j}\Delta t}$. The method is effectively optimal and hence we apply it to simulate our Hamiltonian. 

As mentioned previously, time evolution of the state $\ket{\psi(T)} = e^{-iHT}\ket{\psi(0)}$ is prepared using the Trotter decomposition. Clearly the resource required (which can be estimated by the number of CNOT gates) increases with $t$, and therefore the time of evolution is highly limited by the available resources in the quantum hardware.  
For a time evolution of time $t$ the unitary $U = e^{-iHT}$ is approximated by a unitary $\Tilde{U}$ using the product formula,
\be 
\Tilde{U}(t) = \prod_{k=1}^{M}\prod_{j=1}^{J}e^{-iH_{j}\Delta t}
\ee 


\textit{Analysis of Predictive Accuracy--} Error analysis  for our method has two directions. First, how the error scales w.r.t time by keeping $m$, the number of snapshots we use in DMD, fixed. Second, scaling of the errors as a function of $m$ at a fixed future time. We will derive an analytical form for the first case, and provide numerical evidence for the latter. Our derivation of error upper bounds by using DMD follows as a special case of the predictive accuracy analyzed in~\cite{lu2020prediction}. The original work was done for a differential equation of the form 
\be 
\dv{\tu}{t}=\mathcal{A}\tu + \tf(\tu)
\ee 
where $\mathcal{A}$ is a linear operator, $\tu=[u_1, ..., u_N]^T\in\mathbb{C}^N$ is a vector of observables and $\tf$ is a an external source/sink  term. In terms of Koopman's operator the above equation in discretized time with a small $\Delta t$ can be written as,
\be \label{eq:pde}
\tu^{n+1} \equiv \mathcal{K}_{\Delta t} \tu^{n} \approx \mathcal{A}\tu^{n} + \Delta t\, \tf^{n}, \; n=1, ..., M,
\ee 
where $\tf^{n}$ is the $n$-th discretized snapshot of $\tf$.
Calling the approximate eigenvalues and the eigenvectors of $\mathcal{K}_{\Delta t}$ as $\bLambda_m$ and $\bPhi_m$  let us first posit the following \cite{drmac2018data}, 
\begin{lemma}\label{lemma:e_m}
Define the global truncation error on the $m$-th temporal snapshot by $\epsilon^{m}=\normalfont{\tu}^{m} - \normalfont{\tu}^{m}_{\rm{DMD}}$, then DMD is designed such that 
\be\nonumber
\|\epsilon^{m}\|_2 := \sqrt{\sum_{j=1}^N|\epsilon_j^m|^2}, \; \epsilon_j^m = u_j^m-u_{j,\rm{DMD}}^m, \; j=1, ..., N
\ee
is minimum.
\end{lemma}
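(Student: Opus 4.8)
The plan is to read Lemma~\ref{lemma:e_m} as an optimality-by-construction statement and to trace the claimed minimality back to the defining least-squares problem~\eqref{eq:lsq1} together with its pseudoinverse solution~\eqref{eq:A1}. First I would make the reconstruction explicit: from the eigendecomposition of the fitted operator the DMD trajectory obeys $\tu^n_{\mathrm{DMD}} = \tK_m^{\,n-1}\tu^1 = \bPhi_m\bLambda_m^{\,n-1}\bPhi_m^{\dagger}\tu^1$, so that $\epsilon^m = \tu^m - \tK_m^{\,m-1}\tu^1$ is controlled entirely by the single matrix $\tK_m$. The task then reduces to showing that the $\tK_m$ produced by DMD is precisely the choice that renders the snapshot error as small as the data allow.

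Next I would unpack the objective in~\eqref{eq:lsq1}. Writing the Frobenius norm column by column gives $\|\tK_m\tX_1-\tX_2\|_F^2 = \sum_{n=1}^{m-1}\|\tK_m\tu^n-\tu^{n+1}\|_2^2$, a sum of squared one-step prediction residuals over the fitting window. I would then invoke the standard linear-algebra fact that $\tK_m = \tX_2\tX_1^{+}$ is the unique minimum-Frobenius-norm minimizer of this objective, and that the truncated-SVD variant in Algorithm~\ref{alg:DMD} returns the best rank-$r$ restriction of the same map. Because the eigenpairs $(\bLambda_m,\bPhi_m)$ are read off from this optimal $\tK_m$, any competing choice of eigenvalues and modes corresponds to an admissible operator with a residual that is no smaller, which is the sense in which DMD is, by \emph{design}, the error-minimizing model.

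The main obstacle is the gap between what~\eqref{eq:lsq1} literally minimizes and what the lemma asserts: the least-squares problem minimizes the accumulated one-step residual, whereas $\|\epsilon^m\|_2$ is the error after iterating $\tK_m$ for $m-1$ steps, and these two quantities do not coincide for a general data set. Bridging them is the delicate part. I would handle it by working in the invariant subspace spanned by the dominant DMD modes: when the snapshots lie (approximately) in the range of $\bPhi_m$, the iterated map $\tK_m^{\,m-1}$ acts diagonally through $\bLambda_m^{\,m-1}$, so the one-step optimality of $\tK_m$ propagates to the multi-step reconstruction and minimizing the fitting residual simultaneously controls $\|\epsilon^m\|_2$ over the admissible eigenpairs. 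Quantifying this propagation — i.e. bounding the snapshot error by the fitted residual — is exactly the residual analysis of~\cite{drmac2018data}, which I would cite to close the argument rather than re-derive, so that Lemma~\ref{lemma:e_m} serves as the clean optimality premise for the subsequent $O(t^{3/2})$ error bound.
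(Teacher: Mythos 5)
First, note that the paper itself offers no proof of Lemma~\ref{lemma:e_m}: it is explicitly \emph{posited} with a citation to~\cite{drmac2018data}, and its only role downstream is to justify treating $\norm{\epsilon^m}_2$ as a fixed constant $\Delta_m$ in Theorem~\ref{theorem:local_error_quantum}. So your attempt to actually derive the minimality is going beyond what the paper does, which is commendable, but the derivation as proposed has a genuine gap that you yourself flag and do not close. The least-squares problem~\eqref{eq:lsq1} minimizes the sum of \emph{one-step} residuals $\sum_n \norm{\tK_m \tu^n - \tu^{n+1}}_2^2$ over the fitting window, whereas $\norm{\epsilon^m}_2$ is the error of the \emph{iterated} map $\tK_m^{\,m-1}$ applied to $\tu^1$. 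These are different objective functions with, in general, different minimizers: a matrix that is optimal for one-step prediction can accumulate error badly under iteration, and conversely. Your proposed bridge --- that when the snapshots lie approximately in the span of $\bPhi_m$ the one-step optimality ``propagates'' through $\bLambda_m^{\,m-1}$ --- is a heuristic, not a proof: approximate invariance gives you an error \emph{bound} that grows with the number of iterations (this is essentially what Theorem~\ref{theorem:error_b} quantifies), not a minimality statement for the multi-step error.

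There is also a mismatch with what the cited reference actually establishes. The optimality result in~\cite{drmac2018data} that the lemma is leaning on concerns the \emph{amplitudes}: with the modes $\bPhi_m$ and eigenvalues $\bLambda_m$ already fixed, the coefficient vector $\mathbf{b}$ in the reconstruction $\tu^n_{\rm DMD} = \bPhi_m\bLambda_m^{\,n-1}\mathbf{b}$ is chosen by a separate convex least-squares fit so that the reconstruction error over the snapshot window (in particular at the $m$-th snapshot) is as small as the modal representation permits. That is the precise sense in which ``DMD is designed such that $\norm{\epsilon^m}_2$ is minimum,'' and it is a straightforward linear least-squares statement requiring no invariant-subspace assumption. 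Your argument instead locates the minimality in the choice of $\tK_m$ itself, which is the wrong knob: redirecting the proof to the amplitude-fitting step would close the gap cleanly and match the intent of the citation.
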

The error at the $n$-th time step ($n>m$) is then given by the following theorem \cite{lu2020prediction},
\begin{theorem} 
 \label{theorem:local_error}
    Define the local truncation error 
     \be 
     {\tau^{n}} = \normalfont{\tu}^{n+1} - \normalfont{\tu}^{n}_{\rm{DMD}}\left(\normalfont{\tu^{n}}\right) 
     \ee 
     Then for any $n\ge m$,
 \bea 
 \norm{\tau^{n}}_2  &\le& \epsilon_m 
\eea
where
\be
\begin{aligned}\label{eq:local_error_bound}
\normalfont
\epsilon_m =& c_m\Bigg[ \norm{\normalfont{\tu}^{0}}^2_2 \\
&+\Delta t  \sum_{k=1}^{n-1}\max \qty{ \norm{\normalfont{\tf}^{k} }^2_2, \norm{\normalfont{\tf}^{k+1}}^2_2 }  \Bigg]^{\frac{1}{2}},
\end{aligned}
\ee
with
\be
\norm{\normalfont{\tf}^{k} }^2_2 := \sum_{j=1}^N|{\rm{f}}^k_j|^2.
\ee
\end{theorem}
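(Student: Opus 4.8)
The plan is to track the forcing term $\tf$, which is the sole obstruction to an exact finite-dimensional linear (Koopman) model, through a single prediction step. First I would expand the local truncation error by combining the exact discrete dynamics $\tu^{n+1}=\mathcal{A}\tu^{n}+\Delta t\,\tf^{n}$ of Eq.~\eqref{eq:pde} with the DMD one-step map $\tu^{n}_{\mathrm{DMD}}(\tu^{n})=\tK_m\tu^{n}$, which yields $\tau^{n}=(\mathcal{A}-\tK_m)\tu^{n}+\Delta t\,\tf^{n}$. This splits the error into an operator-mismatch part and a local forcing part, and the triangle inequality gives $\|\tau^{n}\|_2\le\|(\mathcal{A}-\tK_m)\tu^{n}\|_2+\Delta t\,\|\tf^{n}\|_2$.

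Second I would make the mismatch term explicit. Using $\tK_m=\tX_2\tX_1^{+}$ together with $\tX_2=\mathcal{A}\tX_1+\Delta t\,\mathbf{F}$, where $\mathbf{F}=[\tf^{1},\dots,\tf^{m-1}]$ collects the training-window forcing snapshots, one finds $\mathcal{A}-\tK_m=\mathcal{A}(I-\tX_1\tX_1^{+})-\Delta t\,\mathbf{F}\tX_1^{+}$. The first summand is $\mathcal{A}$ composed with the projector onto the orthogonal complement of the learned subspace; by Lemma~\ref{lemma:e_m} the DMD fit is the residual minimizer over exactly this window, so for a state $\tu^{n}$ that lies in (or near) the subspace spanned by $\tX_1$ this contribution is negligible. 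What survives is $-\Delta t\,\mathbf{F}\tX_1^{+}\tu^{n}$, which is itself $O(\Delta t)$ and proportional to the forcing, so that $\|\tau^{n}\|_2\lesssim\Delta t\,\|\mathbf{F}\|\,\|\tX_1^{+}\|\,\|\tu^{n}\|_2+\Delta t\,\|\tf^{n}\|_2$; the pseudoinverse norm $\|\tX_1^{+}\|$ and $\|\mathbf{F}\|$ are the quantities I would later collect into $c_m$.

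Third I would bound $\|\tu^{n}\|_2$ by a discrete energy estimate on $\tu^{n+1}=\mathcal{A}\tu^{n}+\Delta t\,\tf^{n}$. Squaring, expanding the cross term $2\Delta t\,\mathrm{Re}\langle\mathcal{A}\tu^{k},\tf^{k}\rangle$, and invoking that $\mathcal{A}$ is norm-nonincreasing up to $O(\Delta t)$ (which holds because the observable propagator descends from the unitary evolution $e^{-iHt}$), a Young-type inequality applied symmetrically to consecutive steps together with a telescoping sum gives $\|\tu^{n}\|_2^{2}\lesssim\|\tu^{0}\|_2^{2}+\Delta t\sum_{k=1}^{n-1}\max\{\|\tf^{k}\|_2^{2},\|\tf^{k+1}\|_2^{2}\}$. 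The $\max$ is precisely the artifact of bounding the mixed forcing contributions symmetrically across steps $k$ and $k+1$. Substituting this estimate, folding the lower-order $\Delta t\,\|\tf^{n}\|_2$ term and all remaining prefactors into a single constant $c_m$, reproduces the stated bound $\epsilon_m$.

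The step I expect to be the main obstacle is the coupling in the second paragraph: the argument that $\mathcal{A}(I-\tX_1\tX_1^{+})\tu^{n}$ is negligible presupposes that the true trajectory does not leave the subspace learned during the short measurement window, and the surviving term carries $\|\tX_1^{+}\|$, which blows up when the snapshots in $\tX_1$ are nearly linearly dependent. The delicate part is therefore to show that $c_m$ stays bounded uniformly in $n$, i.e.\ that the global-error minimization of Lemma~\ref{lemma:e_m}, combined with the conditioning controlled by the singular-value truncation in Algorithm~\ref{alg:DMD}, genuinely limits the local operator error rather than letting it grow with the prediction horizon. Once $c_m$ is pinned down, the $O(t^{3/2})$ global scaling quoted in the abstract follows by summing these $O(t^{1/2})$ local bounds over the $O(t/\Delta t)$ prediction steps at fixed $\Delta t$.
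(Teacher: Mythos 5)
There is a genuine gap, and it sits exactly where you flagged it. Your second paragraph replaces the quantity the theorem actually controls with one it does not. The relevant comparison is not between $\mathcal{A}$ and $\tK_m=\tX_2\tX_1^{+}$ with the forcing tracked separately; it is between the exact one-step propagator of the observables, i.e.\ the Koopman operator $\mathcal{K}_{\Delta t}$ (which already includes the effect of $\tf$, since $\tu^{n+1}=\mathcal{K}_{\Delta t}\tu^{n}$ exactly), and the DMD surrogate $\bPhi_m\bLambda_m\bPhi_m^{-1}$. The constant $c_m$ is \emph{defined} in the paper, immediately after the theorem statement, as an upper bound on the operator norm $\norm{\mathcal{K}_{\Delta t}-\bPhi_m\bLambda_m\bPhi_m^{-1}}_{2}$ taken as a supremum over all nonzero observable vectors. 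With that definition the first step of the proof is a one-liner, $\norm{\tau^{n}}_2=\norm{(\mathcal{K}_{\Delta t}-\bPhi_m\bLambda_m\bPhi_m^{-1})\tu^{n}}_2\le c_m\norm{\tu^{n}}_2$, valid for every $n\ge m$ with no assumption that $\tu^{n}$ stays near the span of the training snapshots. Your route instead requires bounding $\mathcal{A}(I-\tX_1\tX_1^{+})\tu^{n}$ and $\Delta t\,\mathbf{F}\tX_1^{+}\tu^{n}$ for out-of-sample states; the ``negligible projector'' claim cannot be justified there (Lemma~\ref{lemma:e_m} only controls the fit \emph{inside} the training window), and $\norm{\tX_1^{+}}$ is not controlled by anything in the hypotheses. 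Moreover, your residual additive term $\Delta t\,\norm{\tf^{n}}_2$ sits outside the square root and outside $c_m$, so it cannot be ``folded into'' a constant that must be independent of $n$ and of the forcing; the stated bound has $c_m$ multiplying the entire bracket because the forcing enters only through the growth of $\norm{\tu^{n}}_2$, not as a separate term in $\tau^{n}$.

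Your third paragraph --- the discrete energy estimate $\norm{\tu^{n}}_2^{2}\le\norm{\tu^{0}}_2^{2}+\Delta t\sum_{k}\max\{\norm{\tf^{k}}_2^{2},\norm{\tf^{k+1}}_2^{2}\}$ obtained by squaring the recursion, using that the eigenvalues of $\mathcal{A}$ have modulus at most one, and treating the cross terms symmetrically across consecutive steps --- is the correct second half of the argument and matches the structure of the bound in~\eqref{eq:local_error_bound}, including the origin of the $\max$. (Note the paper itself does not reproduce a proof of this theorem; it defers to~\cite{lu2020prediction}, but the definition of $c_m$ it supplies makes the intended two-step argument unambiguous.) To repair your proof, discard the least-squares decomposition of $\mathcal{A}-\tK_m$ entirely, invoke the operator-norm definition of $c_m$ for the first step, and keep your energy estimate for the second.
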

In the equation, $c_m$ is a constant depending on the number of snapshots $m$ used in DMD and satisfies $\norm{\mathcal{K}_{\Delta t}- \bPhi_m\bLambda_m\bPhi_m^{-1}}_{2}\le c_m$. Here the $2$-norm on the left hand side is defined by
\be
\begin{aligned}\nonumber
\norm{\mathcal{K}_{\Delta t}- \bPhi_m\bLambda_m\bPhi_m^{-1}}_{2}&\\
:= \max_{\substack{\phi(t)\in\mathbb{C}^N, \phi(t)\neq{\mathbf{0}},\\\tilde{t}\geq 0}}&\frac{\left\|\left(\mathcal{K}_{\Delta t}\phi(t)\right)|_{t=\tilde{t}}- \bPhi_m\bLambda_m\bPhi_m^{-1}\phi(\tilde{t})\right\|_2}{\|\phi(\tilde{t})\|_2}\\
= \max_{\substack{\phi(t)\in\mathbb{C}^N, \phi(t)\neq{\mathbf{0}},\\\tilde{t}\geq 0}}&\frac{\left\|\phi(\tilde{t}+\Delta t)- \bPhi_m\bLambda_m\bPhi_m^{-1}\phi(\tilde{t})\right\|_2}{\|\phi(\tilde{t})\|_2},
\end{aligned}
\ee
where we use a shorthand  notation $\left(\mathcal{K}_{\Delta t}\phi(t)\right)|_{t=\tilde{t}}$ to denote 
\be\nonumber
\begin{aligned}
\left(\mathcal{K}_{\Delta t}\phi(t)\right)|_{t=\tilde{t}} &:=
\begin{bmatrix}
\mathcal{K}_{\Delta t}\phi_1(t)\\\vdots \\
\mathcal{K}_{\Delta t}\phi_N(t)
\end{bmatrix}_{t=\tilde{t}} = \begin{bmatrix}
\phi_1(\tilde{t}+\Delta t)\\\vdots \\
\phi_N(\tilde{t}+\Delta t)
\end{bmatrix} \\&= \phi(\tilde{t}+\Delta t).
\end{aligned}
\ee
The $c_m$ term can be viewed as a measure of how closely we can approximate the Koopman operator in an $N$-dimensional subspace.
\begin{theorem}\label{theorem:error_b}
The global error at the n-th time step $\epsilon^n$ by using $m$ snapshot data in DMD is bounded above by
\be\label{eq:global_error_bound}
\norm{\epsilon^{n}}_{2} \le \norm{\bPhi_m}_{2} \norm{\bPhi_m^{-1}}_{2}\qty[\norm{\epsilon^{m}}_{2} + (n-m)\epsilon_m ]
\ee 
with $\epsilon_m$ defined by~\eqref{eq:local_error_bound}, if the eigenvalues of $\mathcal{A}:\{\lambda_{1}, \lambda_{2},..,\lambda_{N}\}$ satisfies $\max_{1\le k \le N} \abs{\lambda_k}\le 1.$
\end{theorem}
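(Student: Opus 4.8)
\emph{Proof proposal.} The plan is to run a discrete consistency-plus-stability argument (in the spirit of Lax--Richtmyer): express the global error $\epsilon^{n}$ as the propagated training-window mismatch $\epsilon^{m}$ plus an accumulated sum of the local truncation errors $\tau^{j}$, and then control that accumulation using the spectral hypothesis. Write $\widetilde{\tK}_m = \bPhi_m\bLambda_m\bPhi_m^{-1}$ for the finite-dimensional DMD approximation of $\mathcal{K}_{\Delta t}$, so that the DMD iterates obey $\tu^{n+1}_{\rm{DMD}} = \widetilde{\tK}_m\,\tu^{n}_{\rm{DMD}}$.

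First I would derive a one-step recursion for $\epsilon^{n}=\tu^{n}-\tu^{n}_{\rm{DMD}}$. Adding and subtracting $\widetilde{\tK}_m\tu^{n}$ and recognizing that $\widetilde{\tK}_m\tu^{n}=\tu^{n}_{\rm{DMD}}(\tu^{n})$ is exactly the single-step DMD update applied to the true state, so that $\tau^{n}=\tu^{n+1}-\widetilde{\tK}_m\tu^{n}$, gives
\be\nonumber
\epsilon^{n+1} = \tu^{n+1}-\widetilde{\tK}_m\tu^{n}_{\rm{DMD}} = \widetilde{\tK}_m\epsilon^{n}+\tau^{n}.
\ee
Iterating this identity from the reference step $m$ up to step $n$ produces the closed form
\be\nonumber
\epsilon^{n} = \widetilde{\tK}_m^{\,n-m}\epsilon^{m} + \sum_{j=m}^{n-1}\widetilde{\tK}_m^{\,n-1-j}\tau^{j},
\ee
which contains exactly $n-m$ transported local-error terms.

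Next I would bound the operator powers. Since $\widetilde{\tK}_m^{\,p}=\bPhi_m\bLambda_m^{\,p}\bPhi_m^{-1}$ with $\bLambda_m$ diagonal, submultiplicativity gives $\norm{\widetilde{\tK}_m^{\,p}}_2 \le \norm{\bPhi_m}_2\norm{\bPhi_m^{-1}}_2\,\norm{\bLambda_m}_2^{\,p}$. This is where the hypothesis $\max_k\abs{\lambda_k}\le 1$ enters: it forces the discrete propagator spectrum, and hence its DMD surrogate $\bLambda_m$, into the closed unit disk, so that $\norm{\bLambda_m}_2=\max_k\abs{\lambda_k}\le 1$ and therefore $\norm{\bLambda_m}_2^{\,p}\le 1$ for every nonnegative power. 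Taking norms in the closed form, applying this uniform bound to each factor, and inserting the local-error estimate $\norm{\tau^{j}}_2\le\epsilon_m$ from Theorem~\ref{theorem:local_error} for each of the $n-m$ summands then yields
\be\nonumber
\norm{\epsilon^{n}}_2 \le \norm{\bPhi_m}_2\norm{\bPhi_m^{-1}}_2\qty[\norm{\epsilon^{m}}_2 + (n-m)\epsilon_m],
\ee
which is precisely the claimed bound~\eqref{eq:global_error_bound}.

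I expect the main obstacle to be the spectral/stability step rather than the algebraic recursion, which is routine. Two points need care. First, one must justify that the eigenvalue condition stated for $\mathcal{A}$ genuinely translates into $\norm{\bLambda_m}_2\le 1$ for the DMD eigenvalues; this relies on $\bLambda_m$ approximating the spectrum of $\mathcal{K}_{\Delta t}$, which is tied to $\mathcal{A}$ through the time discretization in Eq.~\eqref{eq:pde}, so some control of that approximation (or a direct assumption on $\bLambda_m$) is what actually underpins the linear-in-$(n-m)$ growth. Second, the power formula $\widetilde{\tK}_m^{\,p}=\bPhi_m\bLambda_m^{\,p}\bPhi_m^{-1}$ and the appearance of $\bPhi_m^{-1}$ tacitly require $\widetilde{\tK}_m$ to be diagonalizable with invertible $\bPhi_m$; both hold by construction of the DMD eigendecomposition, but they are the structural hypotheses on which the estimate rests and should be stated explicitly.
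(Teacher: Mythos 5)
The paper does not actually prove this theorem in-text---it defers entirely to the cited reference \cite{lu2020prediction}---but your consistency-plus-stability argument (the recursion $\epsilon^{n+1}=\widetilde{\tK}_m\epsilon^{n}+\tau^{n}$, iterated to $\epsilon^{n}=\widetilde{\tK}_m^{\,n-m}\epsilon^{m}+\sum_{j=m}^{n-1}\widetilde{\tK}_m^{\,n-1-j}\tau^{j}$ and bounded via $\norm{\widetilde{\tK}_m^{\,p}}_2\le\norm{\bPhi_m}_2\norm{\bPhi_m^{-1}}_2\norm{\bLambda_m}_2^{\,p}$ together with Theorem~\ref{theorem:local_error}) is precisely the standard route that reference takes, and it is correct. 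The caveat you flag yourself is the only substantive one: the stated hypothesis on the eigenvalues of $\mathcal{A}$ delivers $\norm{\bLambda_m}_2\le 1$ only insofar as the data-driven eigenvalues track the spectrum of $\mathcal{K}_{\Delta t}$, so strictly speaking the linear-in-$(n-m)$ accumulation rests on that approximation property (or on assuming $\norm{\bLambda_m}_2\le 1$ directly), a gap present in the theorem statement itself rather than in your argument.
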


The proof of {\textbf{Theorem}}.~\ref{theorem:error_b} can be found in~\cite{lu2020prediction}. This theorem states that the global error mainly comes from two parts: 1) the approximation to the Koopman operator; 2) the intrinsic property of the dynamics. In the first part, the $c_m$ factor in~\eqref{eq:local_error_bound} measures how well DMD approximates the Koopman operator. Since the number of DMD modes $r$ is determined by the rank of $\tK_m$, when $N$ is small, i.e. with insufficient spatial resolution, the DMD modes does not fully represent the spectral property of the Koopman operator. Thus one way to reduce $c_m$ is by improving the spatial resolution. In the second part, the potentially large condition number $\|\bPhi_m\|_2\|\bPhi_m^{-1}\|_2$ is caused by the departure from normality of the projected Koopman operator $\tK_m$, i.e., the eigenvectors of the $\tilde{\tK}_m$ are far from orthogonal. 
From the theorem, we can see that for a given number of snapshots $m$ used in DMD, the global error scales as $O(t^{3/2})$ temporally.

We will now provide local and global truncation error bounds using the above two theorems for our case of quantum dynamics simulation. 
\begin{theorem} \label{theorem:local_error_quantum}
    The local and the global truncation error by using DMD for the dynamical equation~\eqref{eq:hberg} is respectively govened by,
    \bea 
    \epsilon_m &\leq& \norm{c_m}_2\norm{\normalfont{\tO}}_F\left( 1 +  2n \Delta t  \norm{H}^2_F \right)^{\frac{1}{2}},\nonumber \\
\norm{\epsilon^{n}}_2 &\le& \norm{\bPhi}_2\norm{\bPhi^{-1}}_2 \left[\Delta_m \right. \notag \\
&+& \left. (n-m) \norm{c_m}_2\norm{\normalfont{\tO}}_F\left( 1 +  2n \Delta t  \norm{H}^2_F \right)^{\frac{1}{2}} \right] \notag,
\eea 
    where $\Delta_m > 0$ is a constant for a fixed $m$. As a result, the global truncation error scales as $O(t^{3/2})$.
\end{theorem}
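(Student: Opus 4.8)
The plan is to recognize Eq.~\eqref{eq:dt_hberg} as a special instance of the abstract discretized dynamics in Eq.~\eqref{eq:pde}, so that Theorems~\ref{theorem:local_error} and~\ref{theorem:error_b} can be invoked essentially verbatim; the only real work is to translate their generic quantities ($\tu^0$, $\tf^k$, $\mathcal{A}$) into the quantum-mechanical objects ($\ev{\tO}$ and commutators with $H$) and to bound them. Concretely, I would set $\tu^n = \ev{\tO(t_n)}$, read off from Eq.~\eqref{eq:dt_hberg} the identification $\mathcal{A} = \mathbb{I}$ and $\tf^n = i\,\ev{[H,\tO(t_n)]}$ (the $j$-th component being $i\,\ev{[H,O_j(t_n)]}$), and immediately check the hypothesis of Theorem~\ref{theorem:error_b}: since $\mathcal{A}$ is the identity its spectrum is $\{1\}$, so $\max_k|\lambda_k| = 1 \le 1$ holds and the global bound~\eqref{eq:global_error_bound} is licensed.

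The substantive step is to produce the two norm estimates that feed~\eqref{eq:local_error_bound}. First, for the initial data, $\|\tu^0\|_2^2 = \sum_j |\ev{O_j}|^2 \le \sum_j \|O_j\|_2^2 \le \|\tO\|_F^2$, using that an expectation value is bounded by the operator norm and the operator norm by the Frobenius norm. Second, and more importantly, for the source term I would bound each component by a commutator estimate, $|\ev{[H,O_j]}| \le \|[H,O_j]\|_2 \le \|HO_j\|_F + \|O_jH\|_F \le 2\|H\|_F\|O_j\|_F$, where submultiplicativity of the Frobenius norm is invoked; summing over $j$ gives $\|\tf^k\|_2^2 \le (\mathrm{const})\,\|H\|_F^2\,\|\tO\|_F^2$. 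The key point that makes this bound uniform in $k$ --- and hence pulls the constant out of the sum over $k$ in~\eqref{eq:local_error_bound} --- is that Heisenberg evolution acts by unitary conjugation, $O_j(t_k) = e^{iHt_k}O_j e^{-iHt_k}$, so $\|O_j(t_k)\|_F = \|O_j\|_F$ is time-independent by unitary invariance of the Frobenius norm.

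Assembling these into~\eqref{eq:local_error_bound} replaces $\|\tu^0\|_2^2$ by $\|\tO\|_F^2$ and each max term by the uniform constant $\asymp \|H\|_F^2\|\tO\|_F^2$, while the sum over the $n-1$ steps contributes the factor $\Delta t\,(n-1)$; bounding $(n-1)\le n$ then yields $\epsilon_m \le \|c_m\|_2\,\|\tO\|_F(1 + 2n\Delta t\,\|H\|_F^2)^{1/2}$. Substituting this $\epsilon_m$ into~\eqref{eq:global_error_bound}, with $\|\epsilon^m\|_2$ absorbed into the fixed-$m$ constant $\Delta_m$ (finite and well-defined by Lemma~\ref{lemma:e_m}, since DMD is designed to minimize it), gives the stated global bound. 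Finally, the $O(t^{3/2})$ scaling follows by setting $n \approx t/\Delta t$ with $\Delta t$ fixed: the prefactor $(n-m)$ grows like $t$ while the square-root factor grows like $\sqrt{n}\sim\sqrt{t}$, so their product scales as $t\cdot t^{1/2} = t^{3/2}$.

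I expect the main obstacle to be the uniform-in-time control of the source term rather than the algebraic bookkeeping. One must ensure that $\ev{[H,O_j(t_k)]}$ does not accumulate growth as $k$ increases; the unitary-invariance argument resolves this cleanly, but it relies on treating $O_j(t_k)$ in the Heisenberg picture, so the translation between the Schr\"odinger-picture state evolution run on the quantum device and the Heisenberg-picture operator evolution used in the estimate should be stated explicitly. A secondary subtlety is the precise meaning of $\|\tO\|_F$ as a norm on the stacked observable vector and the exact numerical constant multiplying $\|H\|_F^2$, which the loose commutator bound renders somewhat conservative; neither affects the asymptotic $t^{3/2}$ conclusion.
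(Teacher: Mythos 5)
Your proposal follows essentially the same route as the paper's proof: identify Eq.~\eqref{eq:dt_hberg} as an instance of Eq.~\eqref{eq:pde} with $\mathcal{A}$ the identity and $\tf^n = i\ev{[H,\tO(t_n)]}$, bound the source term uniformly in time using unitary invariance of the Frobenius norm under Heisenberg evolution, and substitute into Theorems~\ref{theorem:local_error} and~\ref{theorem:error_b}. The only divergence is that your triangle-inequality estimate gives $\norm{[H,O_j]}_F \le 2\norm{H}_F\norm{O_j}_F$ (hence a constant $4$ inside the square root), whereas the paper invokes the sharper commutator inequality $\norm{[A,B]}_F^2 \le 2\norm{A}_F^2\norm{B}_F^2$ of Ref.~\cite{bottcher2008frobenius} to obtain the stated constant $2$; as you correctly note, this does not affect the $O(t^{3/2})$ conclusion.
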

\begin{proof}
 By comparing~\eqref{eq:dt_hberg} and~\eqref{eq:pde}, we notice that $\tu^{n} = \ev{\tO(t_n)}$,  $\mathcal{A}$ is an $N$-dimensional identity matrix and 
 \be 
 \tf^{n} = i\mqty[ \ev{\qty[H,O_1(t_n)]}\\ \ev{\qty[H,O_2(t_n)]}\\. \\. \\ \ev{\qty[H,O_N(t_n)]}] 
 \ee 
 Using Lemma~\ref{lemma:util} we can write 
 \bea 
\norm{\normalfont{\tf}^{n}}^2_2 &=& \sum_{j=1}^{N} \left|\ev{\qty[ H, O_j(t_n) ]}\right|^2\notag \\
&\le& \sum_{j=1}^{N}\norm{\qty[ H, O_j(t_n) ]}^2_F 
\le 2\sum_{j=1}^{N}\norm{H}^2_F\norm{O_j(t_n)}^2_F \notag \\
&\le& 2\norm{H}^2_F\sum_{j=1}^{N}\norm{O_j}^2_F 
= 2\norm{H}^2_F\norm{\tO}_F^2
\label{eq:fn}
 \eea 
where we take $\sum_{j=1}^{N}\norm{O_j}^2_F \equiv \norm{\tO}_F^2$. Note that $\norm{\tO}_F^2$ does not change with time. To see this, consider the simplification of the term $\norm{O_j(t_k)}^2_F$: $\norm{O_j(t_k)}^2_F = \norm{e^{iHt_k}O_je^{-iHt_k}}^2_F = \norm{O_j}^2_F$ since $e^{\pm iHt_k}$ are unitary matrices. In the above derivation, we also used the result from Ref.~\cite{bottcher2008frobenius}: $\norm{\qty[A,B]}^2_F\le 2\norm{A}^2_F\norm{B}^2_F$.

Now by putting~\eqref{eq:fn} in~\eqref{eq:local_error_bound}, we obtain the local truncation error 
\bea
\epsilon_m &=& \norm{c_m}_2\left( \sum_{j=1}^{N} \norm{O_j(t_0)}^2_F \right. \notag \\
&+& \left.  \Delta t \sum_{k=1}^{n} \max \qty{ \norm{\normalfont{\tf^{(k)}}}^2_2, \norm{\normalfont{\tf^{(k+1)}}}^2_2} \right)^{\frac{1}{2}}  \notag \\
&\le& \norm{c_m}_2\left( \sum_{j=1}^{N} \norm{O_j}^2_F +  2\Delta t \sum_{k=1}^{n} \norm{H}^2_F\norm{\tO}_F^2 \right)^{\frac{1}{2}} \notag \\
&=& \norm{c_m}_2\norm{\tO}_F\left( 1 +  2n \Delta t  \norm{H}^2_F \right)^{\frac{1}{2}}
\eea 
The global truncation error can be obtained by substituting $\epsilon_m$ in~\eqref{eq:global_error_bound} and taking $\norm{\epsilon^{m}}_2$ as a constant $\Delta_m$ (by Lemma.~\ref{lemma:e_m}). Thus we prove Theorem.~\ref{theorem:local_error_quantum}:
\bea 
\norm{\epsilon^{n}}_2 &\leq& \norm{\bPhi}_2\norm{\bPhi^{-1}}_2 \left[\Delta_m \right. \notag \\
&+& \left. (n-m) \norm{c_m}_2\norm{\normalfont{\tO}}_F\left( 1 +  2n \Delta t  \norm{H}^2_F \right)^{\frac{1}{2}} \right] \notag
\eea 
Now considering predictions for long time $n\gg m$, we can write $(n-m)\approx t/\Delta t$. Using $n\Delta t = t$, we can rewrite the upper bound as,
\bea
\norm{\epsilon^{n}}_2 &=& \norm{\bPhi}_2\norm{\bPhi^{-1}}_2 \left[\Delta_m \right. \notag \\
&+& \left. \frac{t}{\Delta t} \norm{c_m}_2\norm{\normalfont{\tO}}_F\left( 1 +  2t  \norm{H}^2_F \right)^{\frac{1}{2}} \right] \notag
\eea 
The above result shows that the scaling of error is at most $O(t^{3/2})$, which is suboptimal.

\end{proof}

For the second part of error analysis, our focus shifts to examining the relationship between the global truncation error at time $T=t_{M+1}$ and the number of snapshots $m$ used to construct DMD. We will only provide numerical evidence as part of this error analysis.  We notice that the error decreases initially with respect to $m$ before it levels off and oscillates around a constant value. This is most likely due to the fact that the observables of interest here do not span an invariant subspace of the Koopman operator. As a result, the approximation error $c_m$ in~\eqref{eq:local_error_bound} will not decrease to 0. As we mentioned before, one way to reduce $c_m$ is to increase the spatial resolution of the observables. We will examine the numerical results shown in  Fig.~\ref{fig:err_plot}(c \& d) in the next section to check the validity of this approach. The error estimate of the Trotter method can be found in \cite{berry2007efficient}. In this work, we only provide an error bound of DMD for quantum simulation.



\section{Results}
\textit{Hubbard Model and quenching--} To demonstrate our method, we choose the quenching process in a finite size Hubbard Hamiltonian. We start from the ground state $\ket{\psi_0}$ of a non-interacting Hamiltonian $H_0$ on $L$ sites. 
\bea 
H_0 &=& -\sum_{i,j,\sigma} \tau_0\qty(c_{i,\sigma}^{\dagger}c_{j,\sigma} + h.c.) 
\eea 
We time evolve the quantum state w.r.t. an interacting Hamiltonian $H_1$,
\be
\begin{aligned}
H_1 =& -\sum_{i,j,\sigma} \tau_1\qty(c_{i,\sigma}^{\dagger}c_{j,\sigma} + h.c.) + U\sum_{j}n_{j \uparrow}n_{j \downarrow}  \\
& - \mu \sum_{j,\sigma}n_{j,\sigma}
\end{aligned}
\ee
We choose $\tau_0 = 1.0$ and $\tau_1=0.1$ specifically in our example. To preserve particle-hole symmetry, we choose $\mu = U/2$.
The Hamiltonian is mapped to qubits using Jordan Wigner transformation. Throughout the rest of the paper, we consider Hubbard model at half-filling with total spin and its `z' component ($S_z$) to be zero, i.e, the number of electrons is the same as the number of lattice sites $L$, $N_{\uparrow} = N_{\downarrow}$ and we assume open boundary condition. 

Quenching a non-interacting metallic state by introducing a Hubbard interaction and we can observe suppressing tunneling coherent dynamics. 
The period is determined primarily by the interaction strength between the electrons. A large number of theoretical works using analytical and classical numerical methods exist seeking to characterize and
understand nonequilibrium dynamics in quantum systems~\cite{cazalilla2010focus,dziarmaga2010dynamics, polkovnikov2011colloquium}. Such works have been motivated by experiments like ultracold-atom where far-from-equilibrium dynamics can be observed. Other physical phenomena of scientific interest include the  collapse and revival of matter waves with Bose-Einstein
condensates in optical lattices~\cite{greiner2002collapse, will2013coherent}, coherent quench dynamics of a Fermi sea in a Fermi-Bose mixture~\cite{will2015observation}, non-thermal behavior in near-integrable experimental regimes~\cite{will2015observation, gring2012relaxation}, and equilibration in Bose-Hubbard-like systems~\cite{ trotzky2012probing}.

Our simulation proceeds by the following steps. First, we create the initial state, which is done by applying the instantiation method to the default initial state of the IBM machine. We choose the set of density matrix operators $\rho_{pq} = \ev{c_{p,\sigma}^{\dagger}c_{q,\sigma}}\; : (p,q) \in \qty{1,2,..,L}$ as our set of observables $\tO$. Clearly, the lattice site indices are our position coordinates $x$ and we want to observe the dynamics of $\tO$ with time $t$. We then use Trotter method to simulate the Hamiltonian evolution upto time $t_m$ and measure the correlation function $\tO$ at every time step. After that, we use DMD on the measure set of data to predict long time values of the set $\qty{ \ev{c_{p,\sigma}^{\dagger}c_{q,\sigma}} }$.  Density matrix elements can be computed for different pairs of sites within a lattice. For the reason of compactness, we also present our results in momentum space using a linear combination of all pairs in real-space 
\be  
n_{\tk,\sigma} = \frac{1}{L}\sum_{p,q} \ev{c_{p,\sigma}^{\dagger}c_{q,\sigma}} e^{-i\tk (p-q)}, \label{eq:rho_momentum}
\ee  
where $\tk$ is the momentum index.

\textit{Technical Details--}
If the numerical rank $r$ of $\mathbf{X}_1$ is much smaller than $n-1$ and $L^2$, then we can  use a truncated SVD of $\mathbf{X}_1 = \widetilde{\mathbf{U}}\widetilde{\mathbf{\Sigma}}\widetilde{\mathbf{V}}^T$, where the $r\times r$ diagonal matrix $\widetilde{\mathbf{\Sigma}}$ contains the leading $r$ dominant singular values of $\mathbf{X}_1$, and $\widetilde{\mathbf{U}}$ and $\widetilde{\mathbf{V}}$ 
contain the corresponding right and left singular vectors. For more details on the numerical procedure, we refer readers to references~\cite{DMD0,kutz2016dynamic,TuRowley,DMDdiag,DMDtwotime}.

In our system, since the scales of different entries in the density matrix vary a lot, we need to preprocess the data for $j=1, ..., L^2$ before constructing the data matrices by
\be
\langle \tilde{O_j}(t)\rangle = \dfrac{\langle O_j (t) \rangle - {\rm{mean}}{(\langle O_j(t_1:t_n)\rangle )}}{{\rm{std}}(\langle O_j(t_1:t_n)\rangle)}.
\ee
DMD becomes exact if the spatial discretization is infinite. Working with finite size lattices suffer from the problem of insufficient spatial degrees of freedom. Additionally, model Hamiltonians like Hubbard model have spatial and other symmetries that make many of the discretized observables trivial. Such observables can be predicted easily from symmetry arguments or can be ignored for non-trivial dynamics simulation.  To compensate for the resulting insufficient spatial resolution, we introduce an improved higher order DMD (iHODMD) derived from the time-delay embedding theory~\cite{broomhead1986extracting,packard1980geometry,Pan2020,Taken}. 
iHODMD is similar to Algorithm~\ref{alg:DMD}, but we construct the data matrices for each observable $\langle O_j(t)\rangle$, $j\in\{1, ..., L^2\}$ with
\begin{equation*}
\begin{aligned}
	&\mathbf{\tilde{X}}_1 = \begin{bmatrix}
    		\langle \tilde{O_j}^{1}\rangle & \langle \tilde{O_j}^{n_g+1}\rangle & \cdots & \langle \tilde{O_j}^{(n_l-1)\times n_g+1}\rangle \\
    		\langle \tilde{O_j}^{2}\rangle & \langle \tilde{O_j}^{n_g+2}\rangle & \cdots & \langle \tilde{O_j}^{(n_l-1)\times n_g+2}\rangle \\
    		\vdots & \vdots & \vdots & \vdots\\
    		\langle \tilde{O_j}^{n_s}\rangle & \langle \tilde{O_j}^{n_g+n_s}\rangle &\cdots & \langle \tilde{O_j}^{(n_l-1)\times n_g+n_s}\rangle
    	\end{bmatrix}, \\
	&\mathbf{\tilde{X}}_2 =\begin{bmatrix}
        	\langle \tilde{O_j}^{1+\tau}\rangle & \langle \tilde{O_j}^{n_g+1+\tau}\rangle & \cdots & \langle \tilde{O_j}^{(n_l-1)\times n_g+1+\tau}\rangle \\
        	\langle \tilde{O_j}^{2+\tau}\rangle & \langle \tilde{O_j}^{n_g+2+\tau}\rangle & \cdots & \langle \tilde{O_j}^{(n_l-1)\times n_g+2+\tau}\rangle \\
        	\vdots & \vdots & \vdots & \vdots\\
        	\langle \tilde{O_j}^{(n_s+\tau)}\rangle & \langle \tilde{O_j}^{n_g+n_s+\tau}\rangle &\cdots & \langle \tilde{O_j}^{(n_l-1)\times n_g+n_s+\tau}\rangle
        \end{bmatrix},
\end{aligned}
\end{equation*}
where $n_s$ is the order of iHODMD, $n_g$ denotes the time difference of two adjacent columns in the data matrices, and $\tau$ denotes the time difference between two data matrices. We call the iHODMD with the above data matrices iHODMD($n_s$,$n_g$,$\tau$). The number of columns $n_l$ is chosen by $n_l = {\rm{floor}}\left((m-n_s-\tau)/n_g\right)+1$. This iHODMD method improves the original higher order DMD method presented in~\cite{HODMD} by introducing two additional parameters $n_g$ and $\tau$, which increases the flexibility of the algorithm. Since \eqref{eq:hberg} holds for each observable $O_j$ independently, a similar error analysis still applies for iHODMD. We only need to take $\mathbf{O}(t)$ in the analysis by 
\be
\mathbf{O}(t) = \left[\tilde{O_j}(t),\, \tilde{O_j}(t+\Delta t), \,..., \, \tilde{O_j}(t+n_s\Delta t)\right]^T,
\ee
and take the time step size to be $\tau\Delta t$.

The major computational cost of iHODMD($n_s$,$n_g$,$\tau$) is in the SVD of $\mathbf{\tilde{X}}_1$, which is $O(\min(n_s^2L^4n_l, n_sL^2n_l^2))$. The memory cost of the algorithm is $O(n_sL^2n_l)$.

\begin{figure}[ht!]
    \centering
    \includegraphics[width=0.9\linewidth]{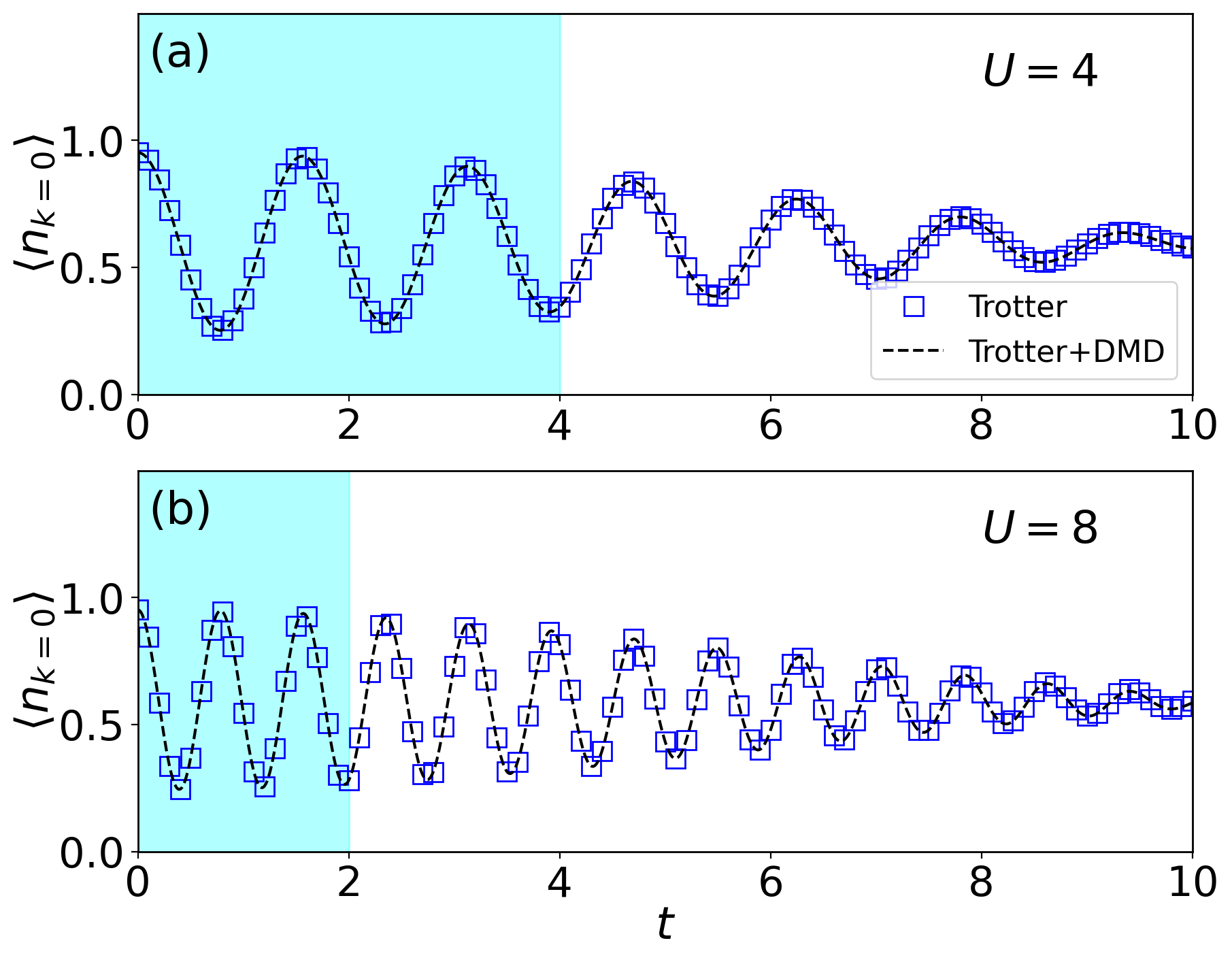} 
    \caption{\textbf{Extrapolation result for $\boldsymbol{n_{k=0}}$:} The shaded area represents the data used for DMD $L=6$ in this example. (a) $U=4.0$, we use the first 400 snapshots for the extrapolation, and iHODMD$(10,4,4)$ is used. (b) $U=8.0$, we use the first 200 snapshots for the extrapolation, and iHODMD$(20, 2, 2)$ is used.}
    \label{fig:nk_time}
\end{figure}
\textit{Simulator results--} 
We consider the Hubbard system with $L=6$. 
We first present our results for $\tk=0$ in \eqref{eq:rho_momentum} in Fig.~\ref{fig:nk_time}. Panels (a) and (b) are the results for $U=4.0$ and $8.0$, respectively.   The blue squares represent the data due to Trotter evolution and the black curve is the DMD prediction. The prediction is calculated based on the time-series data collected within the the shaded-cyan region. Here we use iHODMD once for all the $6$ trajectories in the momentum space. From these figures, we can see that DMD gives good extrapolation for both $U$s. For a larger interaction $U=8$, we can even use a smaller number of snapshots to get results accurate enough.

As mentioned earlier, due to finite size and spatial symmetries of our system, only a handful of $\rho_{pq}$ are independent. Therefore we choose $\rho_{13}$ to show the prediction error in Fig.~\ref{fig:err_plot}. We take 200 snapshots up to $t=2$ in the extrapolation for both cases. Fig.~\ref{fig:err_plot}(a) shows the difference $\Delta = \rho_{13} - \rho_{13}^{\rm{DMD}}$ as a function of $t$ for different $U$. We show that the error is upper-bounded by $\sim t^{\frac{3}{2}}$, shown in magenta. To further illustrate our result, we have applied our method to the dynamics of $\ev{S^{z}_{j}(t)S^{z}_{j+1}(t)}$ for quenching in XXZ spin model \cite{Smith_2019}, where $S^{z}_{j}$ is the spin magnetic moment at $j$-th site in a $L$ spin system. The details about the Hamiltonian and the quench protocol is given in Appendix~\ref{appendix:xxz}. The DMD error for the XXZ model simulation is shown in Fig.~\ref{fig:err_plot}(b), where $O(t) = \ev{S^{z}_{1}(t)S^{z}_{3}(t)}$ for $L=6$ and $O(t) = \ev{S^{z}_{1}(t)S^{z}_{6}(t)}$ for $L=12$. The errors are bounded above by $t^{3/2}$ following our derivation.  As the Trotter splitting gives us error at $O(10^{-2})$ with the time step size $\Delta t=0.01$, the extrapolation results with $m\geq 200$ are generally acceptable. The oscillation behavior of the error with respect to $m$ is due to the nature of data-driven methods as we have observed before~\cite{DMDdiag,DMDtwotime}.

\begin{figure}[h!]
    \centering
    \includegraphics[width=1.0\linewidth]{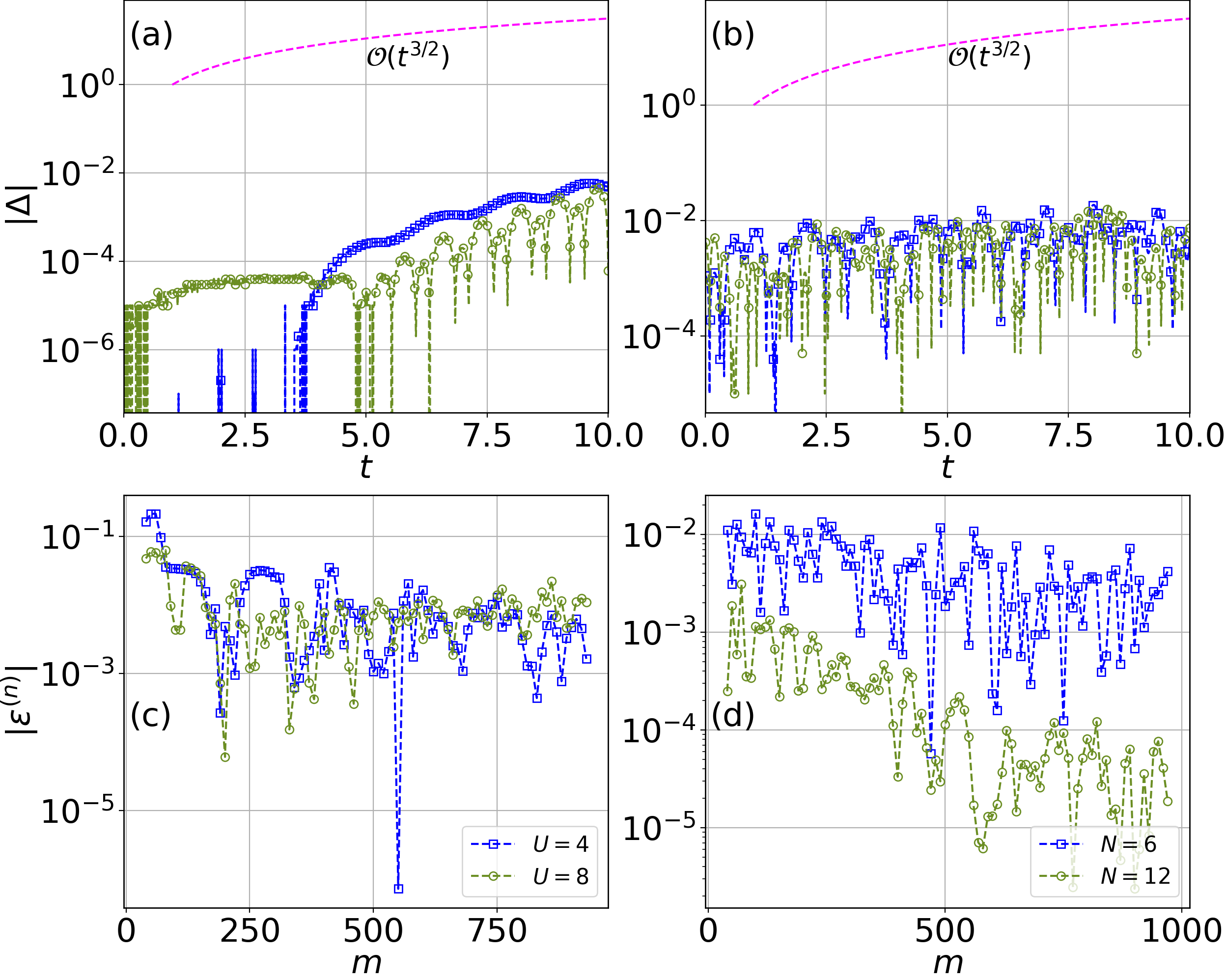} 
    \caption{\textbf{Errors between the original trajectories and the extrapolated results.}  Hubbard model (a) \& (c) XXZ model (b) \& (d). For (a), we use the first 400 (200) snapshots of $\rho_{13}$ in DMD extrapolation for $L=6$ (12 qubits), and $U=4.0$ or $8.0$ Hubbard model quenching, respectively. In DMD computation, we use iHODMD(20,10,10) for $U=4.0$ and iHODMD(20,2,2) for $U=8.0$. (b) DMD error for $\ev{S^{z}_{1}(t)S^{z}_{3}(t)}$ ($L=6$) and $S^{z}_{1}(t)S^{z}_{6}(t)$ ($L=12$) as a function of time for XXZ dynamics for $L=6$ and $L=12$. We use iHODMD(20,4,4) for both cases. (c) \& (d) shows the absolute error at the last time step $t=10$ for different number of snapshots $(m)$ taken in DMD. iHODMD parameters are the same as in (a) \& (c).} 
    \label{fig:err_plot}
\end{figure}

In Fig.~\ref{fig:err_plot}(c \& d), we plot the  error at the final time step ($T=t_{M+1}$) as a function of number of snapshots taken in DMD for Hubbard and XXZ model, respectively. We see that the error decreases initially with an increasing $m$ before it reaches and oscillates around a constant value which is at the order of $\sim 10^{-3} - 10^{-5}$. We also notice that the error drops to a lower threshold as we go from $L=6$ to $L=12$ for the XXZ model (Fig.~\ref{fig:err_plot}(d)). This is probably because having more observables in $\tO$ with $L=12$ improves the spatial resolution of the observables, which leads to a smaller $c_m$ in~\eqref{eq:local_error_bound} and a better approximation to the Koopman's operator. We remark here that the performance of DMD can be further improved if an alternative set of observables that span an invariant subspace of the Koopman's operator are used~\cite{li2017extended,Matthew_edmd}. However, to the best of our knowledge, no systematic way has been established to determine the optimal choice of observables for DMD. Recent research suggested that combining DMD with a convolutional autoencoder for observable selection will lead to improved accuracy~\cite{lusch2018deep,haq2022dynamic}. We will examine this approach in future work.\\



\section{Conclusion}
We have successfully combined DMD, a well-known data-driven predictive method with quantum data  to predict long term observables in a quantum system. For a dynamical quantum system, the quantum data is collected in form of observable expectation values. Since the resources in terms of quantum devices (e.g, CNOTs) increase quickly as we simulate dynamics, long time simulation of large system sizes are very hard to run. By utilizing the predictive methods like DMD, we can run quantum simulations for a small number of time steps and predict future results. We have analytically shown that the error scales at most as $O(t^{3/2})$. 

Our method can be applied to any other dynamical systems as well and will find application in other forms of property prediction for quantum systems. We recently came across a work \cite{shen2023estimating} that uses a similar protocol for phase estimation. In summary, our method could be used in multiple dynamical problems to obtain meaningful results in the near term quantum devices.

\section{Acknowledgement}
The authors would like to thank Yizhi Shen and Nathan Wiebe for useful discussions. 
This work was supported by the U.S. Department of Energy, Office of Science, Office of Advanced Scientific Computing Research through the Accelerated Research in Quantum Computing Program (NG, WAdJ). This work was also supported by the Center for Computational Study of Excited-State Phenomena in Energy Materials (C2SEPEM) at the Lawrence Berkeley National Laboratory, which is funded by the U.\,S. Department of Energy, Office of Science, Basic Energy Sciences, Materials Sciences and Engineering Division, under Contract No. DE-AC02-05CH11231, as part of the Computational Materials Sciences Program (JY, CY).

\bibliography{sample.bib}
\bibliographystyle{apsrev4-1}

\appendix

\section{Operator Expectation}
\begin{lemma}\label{lemma:util}
For any unit wave function $|\psi\rangle\in\mathbb{C}^N$ and matrix $O\in\mathbb{C}^{N\times N}$,
\be
\langle\psi|O|\psi\rangle \leq \|O\|_2 \leq \|O\|_F,
\ee
where $\|\cdot\|_2$ and $\|\cdot\|_F$ stand for the $L^2$ and the Frobenius norms for matrices, respectively.
\end{lemma}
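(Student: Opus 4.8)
The statement is a chain of two elementary inequalities, so the plan is to treat each separately and then compose. First I would fix the interpretation of the left-most quantity: since $\langle\psi|O|\psi\rangle$ is in general complex, I read the bound as applying to its modulus (which is in fact what is invoked downstream in Eq.~\eqref{eq:fn} through $|\ev{[H,O_j]}|$). I take $\|O\|_2$ to be the spectral (operator) norm $\sigma_{\max}(O)$ and $\|O\|_F=\left(\sum_{i,j}|O_{ij}|^2\right)^{1/2}=\left(\sum_k\sigma_k^2\right)^{1/2}$ the Frobenius norm, where $\{\sigma_k\}$ are the singular values of $O$.

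For the first inequality I would apply the Cauchy--Schwarz inequality to the vectors $|\psi\rangle$ and $O|\psi\rangle$, giving $|\langle\psi|O|\psi\rangle|\le\big\||\psi\rangle\big\|_2\,\big\|O|\psi\rangle\big\|_2$. Because $|\psi\rangle$ is a unit vector, $\big\||\psi\rangle\big\|_2=1$, and by the definition of the operator norm $\big\|O|\psi\rangle\big\|_2\le\|O\|_2\,\big\||\psi\rangle\big\|_2=\|O\|_2$. Chaining these yields $|\langle\psi|O|\psi\rangle|\le\|O\|_2$ directly.

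For the second inequality I would use the singular value characterization of the two norms. Writing $O$ in its singular value decomposition, $\|O\|_2=\sigma_{\max}=\sigma_1$ while $\|O\|_F=\left(\sum_k\sigma_k^2\right)^{1/2}$, and since every $\sigma_k\ge 0$ one has $\sigma_1^2\le\sum_k\sigma_k^2$, hence $\|O\|_2\le\|O\|_F$. Composing the two estimates produces the claimed chain.

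I do not expect any serious obstacle, as both steps are textbook. The only point requiring care is the aforementioned complex-valued nature of $\langle\psi|O|\psi\rangle$ for non-Hermitian $O$, so I would state the lemma (and apply it) with an absolute value on the left-hand side to keep the inequality meaningful; alternatively, if $O$ is taken to be Hermitian the expectation is real and one may additionally record the two-sided bound $-\|O\|_2\le\langle\psi|O|\psi\rangle\le\|O\|_2$.
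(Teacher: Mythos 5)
Your proof is correct and follows essentially the same route as the paper's: Cauchy--Schwarz applied to $|\psi\rangle$ and $O|\psi\rangle$, the definition of the operator ($L^2$) norm, and the standard singular-value comparison for $\|O\|_2\le\|O\|_F$. Your version is in fact slightly more careful than the paper's, which writes the Cauchy--Schwarz step with mismatched squares (and states the operator-norm definition without the square root), and which omits the absolute value on the left-hand side that, as you correctly note, is needed for the inequality to be meaningful when $O$ is not Hermitian.
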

\begin{proof}
    From Cauchy inequality, we have
    \be
      \langle\psi|O|\psi\rangle \leq \langle\psi|\psi\rangle\langle O\psi|O\psi\rangle.
    \ee
    Recall the definition
    \be
      \|O\|_2 = {\rm sup}_{|\psi\rangle\neq 0}\dfrac{\langle O\psi|O\psi\rangle}{\langle\psi|\psi\rangle},
    \ee
    by using the fact that $\langle\psi|\psi\rangle=1$, we arrive at
    \be
      \langle\psi|O|\psi\rangle\leq \langle\psi|\psi\rangle\|O\|_2 = \|O\|_2\leq \|O\|_F,
    \ee
    where the last inequality holds naturally.
\end{proof}

\section{Quenching in XXZ model}
\begin{figure}[tbp]
    \centering
    \includegraphics[width=0.9\linewidth]{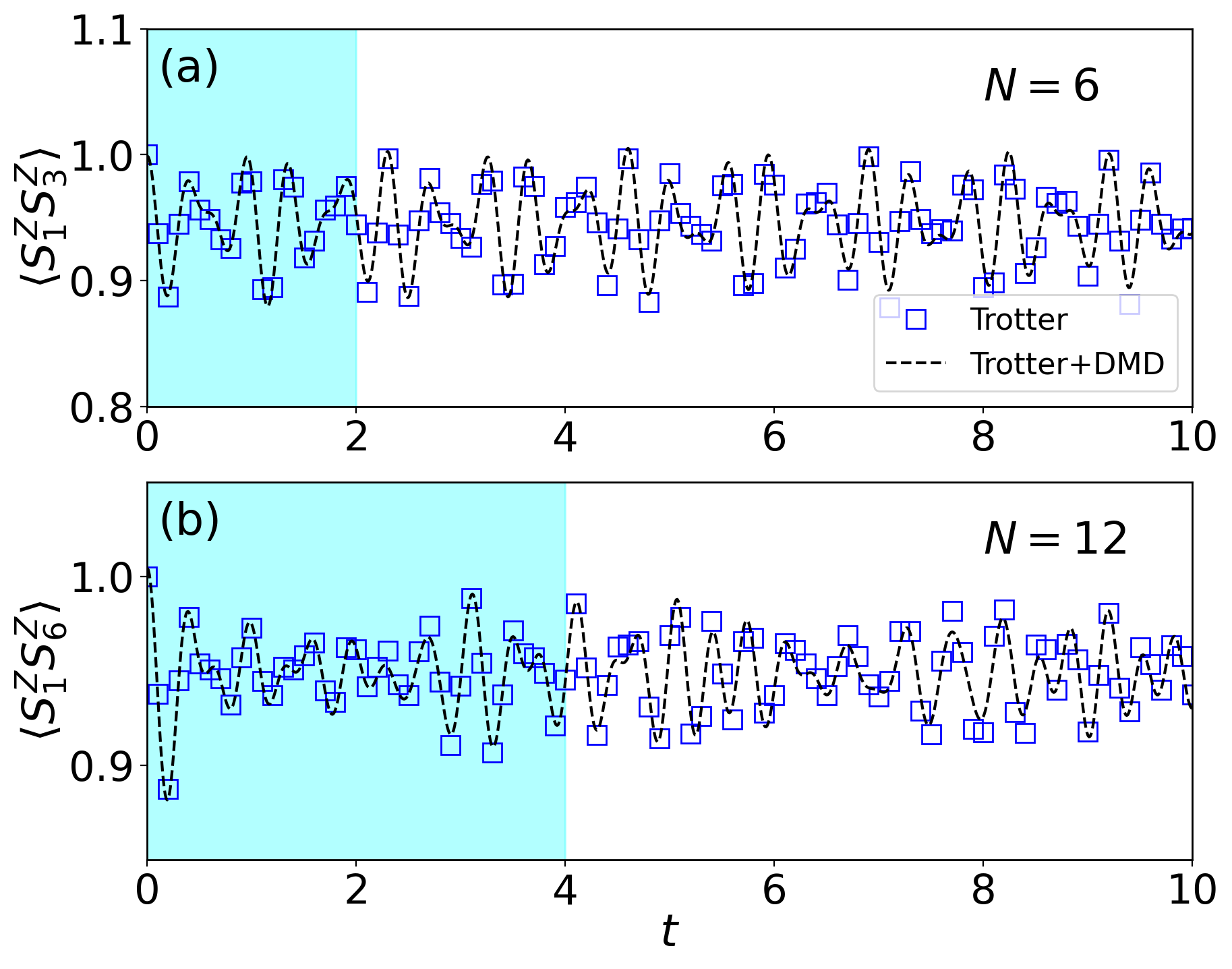} 
    \caption{\textbf{Extrapolation result for $\langle S_{j_1}^ZS_{j_2}^Z\rangle$:} The shaded area represents the data used for DMD with $L=6$ and $L=12$ (a) $L=6$, we use the first 200 snapshots for the extrapolation, and iHODMD$(20,1,1)$ is used. (b) $L=12$, we use the first 400 snapshots for the extrapolation, and iHODMD$(20,4,4)$ is used.}
    \label{fig:spin_time}
\end{figure}
We consider one-dimensional spin-1/2 chains consisting of $L$ spins, initially prepared in a domain wall configuration $\ket{...\uparrow\uparrow\downarrow\downarrow}...$. The time evolution after the quantum quench is led by the Hamiltonian,
\be
\begin{aligned}
H = &-\sum_{j=1}^{L-1}\qty( X_{j}X_{j+1} + Y_{j}Y_{j+1}) \\
&+ U \sum_{j=1}^{L-1}Z_{j}Z_{j+1} + h\sum_{j=1}^{L}Z_{j}
\end{aligned}
\ee
where $\qty{X,Y,Z}$ are Pauli matrices with eigenvalues $\pm 1$. In our calculation we choose $U=4.0$ and $h=0.1$. We calculate spin-spin correlation $\ev{Z_{j}Z_{j+1}}$ as a function of time. The results are shown in Fig.~\ref{fig:spin_time}. The extrapolation has been done using the data within the shaded cyan region. For clarity in vision, we show the plots for $\ev{S^{z}_{1}(t)S^{z}_{3}(t)}$ for $L=6$ and $\ev{S^{z}_{1}(t)S^{z}_{6}(t)}$ for $L=12$. We have tested our methods for other sets of correlation functions and they yield similar accuracy.

\label{appendix:xxz}

\end{document}